\theoremstyle{plain}
\newtheorem{theorem}{Theorem}
\newtheorem{lemma}{Lemma}
\newtheorem{conjecture}{Conjecture}
\theoremstyle{definition}
\newtheorem{remark}{Remark}
\newenvironment{thmenum}
 {\begin{enumerate}[label=\upshape(\alph*),ref=\thetheorem(\alph*)]}
 {\end{enumerate}}
\DeclareMathOperator{\Root}{root}
\DeclareMathOperator{\Left}{left}
\DeclareMathOperator{\Right}{right}
\DeclareMathOperator{\Parent}{parent}
\DeclareMathOperator{\Postorder}{postorder}
\DeclareMathOperator{\Preorder}{preorder}
\DeclareMathOperator{\BST}{BST}  %
\DeclareMathOperator{\DF}{DF}  %
\DeclareMathOperator{\BB}{BB}  %
\DeclareMathOperator{\LCA}{lca}  %
\title{Splaying Preorders and Postorders\footnote{Research at
Princeton University partially supported by an innovation research grant from
Princeton and a gift from Microsoft.}}
\author{
  Caleb C. Levy\footnote{Baskin School of Engineering, UC Santa Cruz; cclevy@ucsc.edu.}\\
  \and
  Robert E. Tarjan\footnote{Department of Computer Science, Princeton University, and Intertrust Technologies; ret@cs.princeton.edu.}
}
\begin{document}
\maketitle              %
\begin{abstract}
Let $T$ be a binary search tree of $n$ nodes with root $r$, left subtree
$L=\Left(r)$, and right subtree $R=\Right(r)$. The \emph{preorder} and
\emph{postorder} of $T$ are defined as follows: the preorder and postorder of
the empty tree is the empty sequence, and
\begin{align*}
\Preorder(T) &= (r)\oplus\Preorder(L)\oplus\Preorder(R)\\
\Postorder(T) &= \Postorder(L)\oplus\Postorder(R)\oplus(r),
\end{align*}
where $\oplus$ denotes sequence concatenation.\footnote{We will refer to any
such sequence as \emph{a preorder} or \emph{a postorder}.} We prove the
following results about the behavior of splaying \cite{SPLAY_JOURNAL} preorders
and postorders:

\begin{enumerate}
\item\label{item:LinearTimePreorder} Inserting the nodes of $\Preorder(T)$ into
an empty tree via splaying costs $O(n)$.
(Theorem~\ref{thm:InsertionSplayingPreorders}.)
\item\label{item:LinearTimePostorder} Inserting the nodes of $\Postorder(T)$
into an empty tree via splaying costs $O(n)$.
(Theorem~\ref{thm:InsertionSplayingPostorders}.)
\item\label{item:LinearTimeBalancedTrees} If $T'$ has the same keys as $T$ and
$T$ is \emph{weight-balanced} \cite{WEIGHT_BALANCED_TREES} then
splaying either $\Preorder(T)$ or $\Postorder(T)$ starting from $T'$ costs
$O(n)$. (Theorem~\ref{thm:BalancedTraversal}.)
\end{enumerate}

For \ref{item:LinearTimePreorder} and \ref{item:LinearTimePostorder}, we use
the fact that preorders and postorders are \emph{pattern-avoiding}: i.e. they
contain no subsequences that are order-isomorphic to $(2,3,1)$ and $(3,1,2)$,
respectively. Pattern-avoidance implies certain constraints on the manner in
which items are inserted. We exploit this structure with a simple potential
function that counts inserted nodes lying on access paths to uninserted nodes.
Our methods can likely be extended to permutations that avoid more general
patterns. The proof of~\ref{item:LinearTimeBalancedTrees} uses the fact that
preorders and postorders of balanced search trees do not contain many large
``jumps'' in symmetric order, and exploits this fact using the dynamic finger
theorem \cite{DYNAMIC_FINGER_1,DYNAMIC_FINGER_2}.

Items \ref{item:LinearTimePostorder} and \ref{item:LinearTimeBalancedTrees} are
both novel. Item \ref{item:LinearTimePreorder} was originally proved by
Chaudhuri and H\"oft \cite{SPLAY_PREORDER}; our proof simplifies theirs. These
results provide further evidence in favor of the elusive \emph{dynamic
optimality conjecture} \cite{SPLAY_JOURNAL}.
\end{abstract}

\paragraph{Outline.} Section \ref{sec:Introduction} discusses the mathematical
preliminaries, historical background, and context for this investigation, and
Section \ref{sec:RelatedWork} samples some related work. Familiar readers may
skip directly to the main results and their proofs, in Sections
\ref{sec:MainResults} and \ref{sec:BalancedTrees}. Section
\ref{sec:MainResults} proves that inserting both preorders and postorders via
splaying takes linear time. Section \ref{sec:BalancedTrees} establishes that
splaying preorders and postorders of \emph{balanced} search trees
\cite{WEIGHT_BALANCED_TREES} takes linear time, regardless of starting tree.
Section \ref{sec:GeneralPatterns} provides our thoughts on how to analyze
insertion splaying permutations that avoid more general patterns, particularly the class of ``$k$-increasing'' sequences \cite{PATTERN_AVOIDANCE}.

\section{Preliminaries}\label{sec:Introduction}

\subsubsection*{Binary Search Trees}

A \emph{binary tree} $T$ contains of a finite set of \emph{nodes}, with one
node designated to be the \emph{root}. All nodes have a \emph{left} and a
\emph{right} \emph{child} pointer, each leading to a different node. Either or
both children may be \emph{missing}, and we denote a missing child by
\texttt{null}. Every node in $T$, save for the root, has a single
\emph{parent} node of which it is a child. (The root has no parent.) The
\emph{size} of $T$ is the number of nodes it contains, and is denoted $|T|$.

There is a unique path from $\Root(T)$ to every other node $x$ in $T$, called
the \emph{access path for $x$ in $T$}. If $x$ is on the access path for $y$
then we say $x$ is an \emph{ancestor} of $y$, and $y$ is a \emph{descendent} of
$x$. We refer to the subtree comprising $x$ and all of its descendants as the
subtree rooted at $x$. Nodes thus have \emph{left} and \emph{right subtrees}
rooted respectively at their left and right children. (Subtrees are
\emph{empty} for \texttt{null} children.) The \emph{depth} of the node $x$,
denoted $d_T(x)$, is the number of edges on its access path. Its
\emph{right-depth} is the number of right pointers followed, and its
\emph{left-depth} is the number of left pointers followed.

In a \emph{binary search tree}, every node has a unique \emph{key}, and the
tree satisfies the \emph{symmetric order} condition: every node's key is
greater than those in its left subtree and smaller than those in its right
subtree. The binary search tree derives its name from how its structure enables
finding keys. To find a key $k$ initialize the current node to be the root.
While the current node is not \texttt{null} and does not contain the given key,
replace the current node by its left or right child depending on whether $k$ is
smaller or larger than the key in the current node, respectively. The search
returns the last current node, which contains $k$ if $k$ is in the tree and
otherwise \texttt{null}.

The \emph{lowest common ancestor} of $x$ and $y$ in $T$, denoted $\LCA_T(x,y)$,
is the deepest node shared by the access paths of both $x$ and $y$. Since the
root is a common ancestor of any pair of nodes in $T$ and $T$ is finite,
$\LCA_T(x,y)$ exists and is well defined. Furthermore
$\min\{x,y\}\le\LCA_T(x,y)\le\max\{x,y\}$.

To \emph{insert} a new key $k$ into a binary search tree $T$, we first do a
search for $k$ in $T$. When the search reaches a missing node, we replace this
node with a node containing the key $k$. (Inserting into an empty tree makes
$k$ the root key.)

\subsubsection*{Rotation}

Binary search trees are the canonical data structure for maintaining an ordered
set of elements, and are building blocks in countless algorithms. Perhaps the
most attractive feature of binary search trees is that the number of
comparisons required to find an item in an $n$-node binary search tree is
$O(\log n)$, provided that that the tree is properly arranged, which is good in
theory and practice. However, without exercising care when inserting nodes, a
binary search tree can easily become unbalanced (for example when inserting
$1,2,\dots,n$ in order), leading to search costs as high as $\Omega(n)$. Thus,
binary search trees require some form of maintenance and restructuring for good
performance.

\begin{figure}
\centering
\includegraphics[width=\columnwidth,trim={.15in 1.3in .25in
2in},clip]{./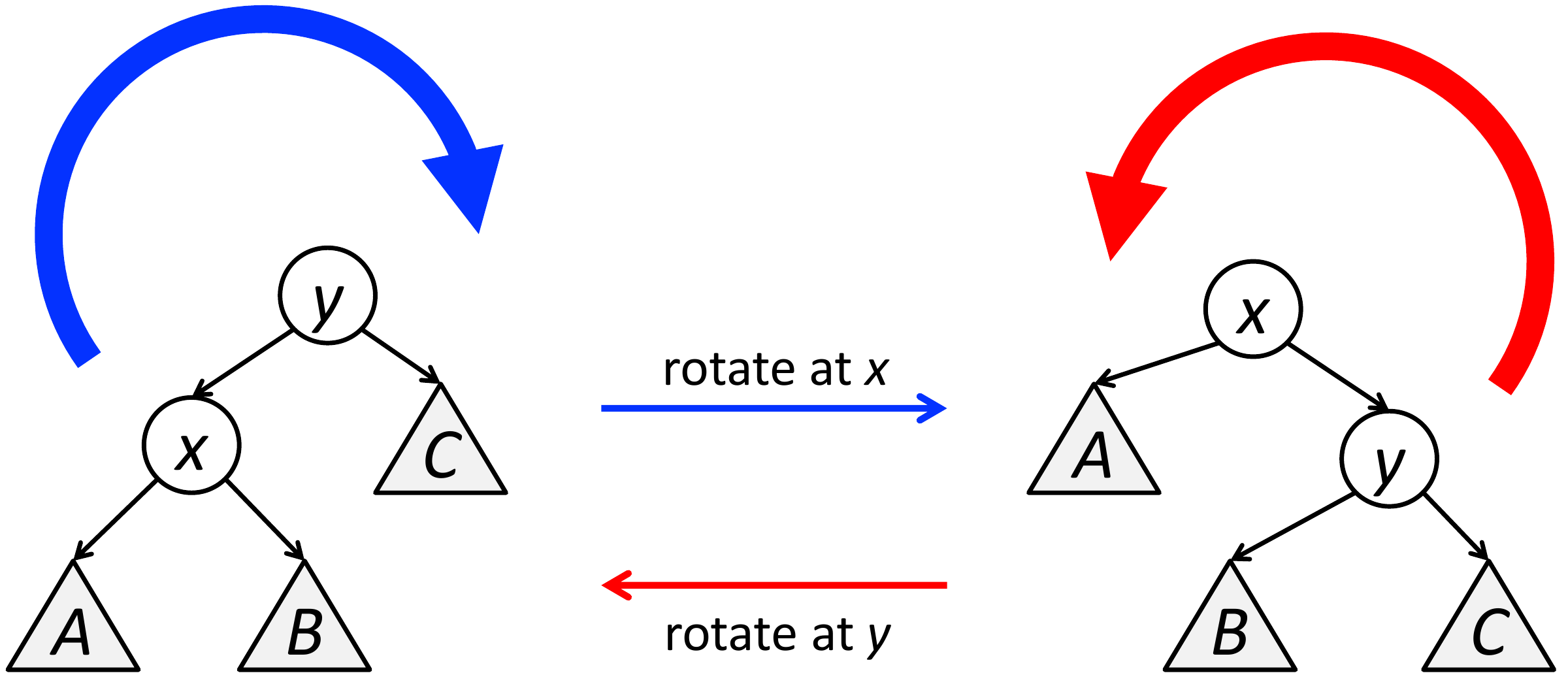}
\caption{Rotation at node $x$ with parent $y$, and reversing the effect
by rotating at $y$.}\label{fig:Rotation}
\end{figure}

We will employ a restructuring primitive called \emph{rotation}. A rotation at
left child $x$ with parent $y$ makes $y$ the right child of $x$ while
preserving symmetric order. A rotation at a right child is symmetric, and
rotation at the root is undefined. (See Figure \ref{fig:Rotation}). A rotation
changes three child pointers in the tree.

Rotations were first employed in ``balanced'' search trees, which include AVL
trees \cite{AVL_TREES}, Red-Black trees \cite{RED_BLACK_TREES}, weight-balanced
trees \cite{WEIGHT_BALANCED_TREES}, and more recently weak AVL trees
\cite{WEAK_AVL_TREES}. These trees augment nodes with bits that provide rough
information about how ``balanced'' each node's subtree is. Whenever an item is
inserted or deleted, rotations are performed to restore invariants on the
balance bits that ensure all search paths have $O(\log n)$ nodes. While
balanced searched trees are not the focus of this work, they were progenitors
for the main algorithm of interest.

\subsubsection*{Splay}

The Splay algorithm \cite{SPLAY_JOURNAL} eschews keeping track of balance
information, replacing it with an intriguing notion: instead of adjusting the
search tree only after insertion and deletion, Splay modifies the tree after
\emph{every} search.

\begin{figure}
\centering
\includegraphics[width=\columnwidth]{./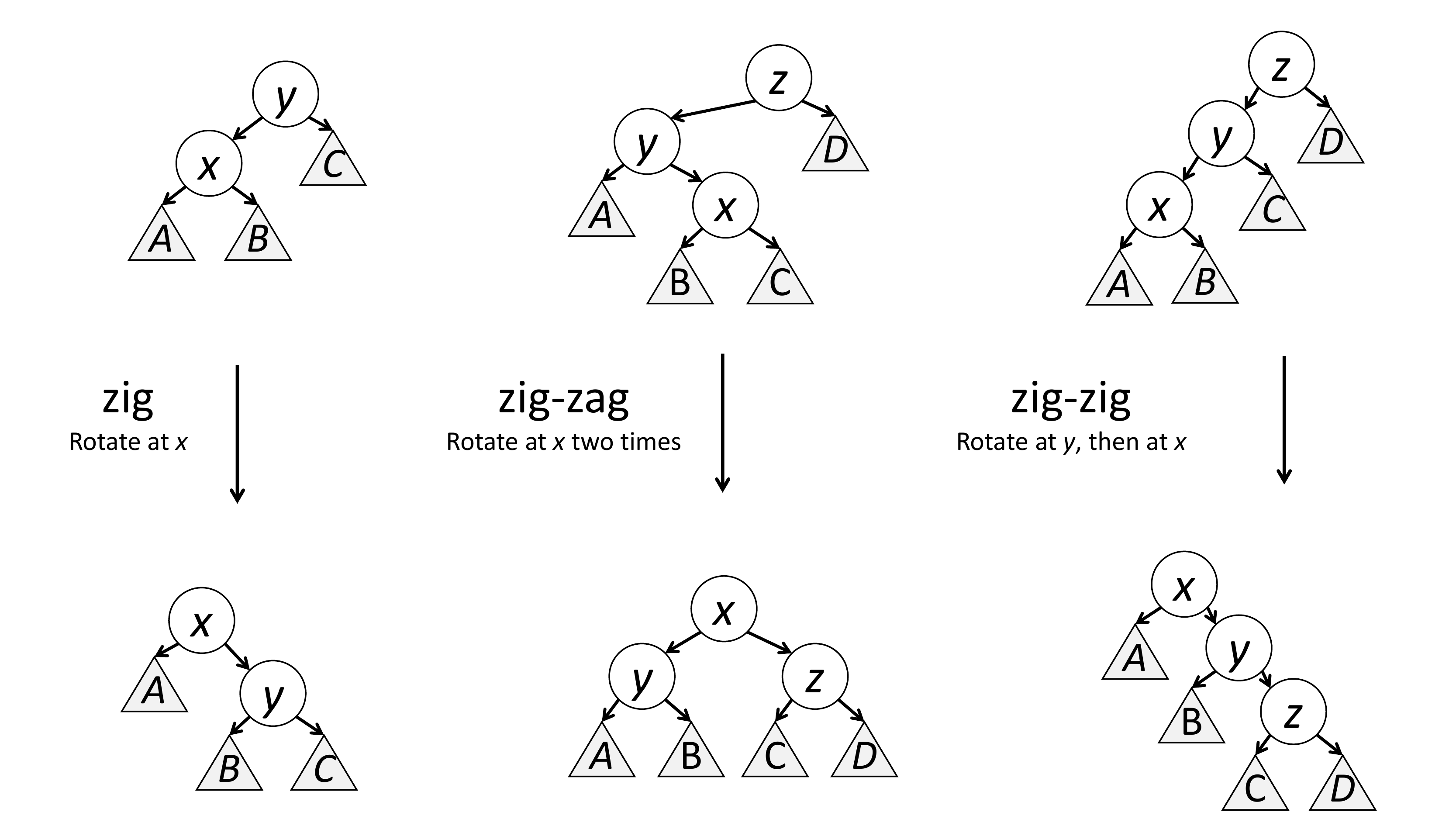}
\caption{A splaying step at node $x$. Symmetric variants not shown. Triangles
denote subtrees.}\label{fig:SplaySteps}
\end{figure}

The algorithm begins with a binary search for a key in the tree. Let $x$ be the
node returned by this search. If $x$ is not \texttt{null} then the algorithm
repeatedly applies a ``splay step'' until $x$ becomes the root. A splay step
applies a certain series of rotations based on the relationship between $x$,
its parent, and its grandparent, as follows. If $x$ has no grandparent (i.e.
$x$'s parent is the root), then rotate at $x$ (this case is always terminal).
Otherwise, if $x$ is a \emph{left} child and its parent is a \emph{right}
child, or vice-versa, rotate at $x$ twice. Otherwise, rotate at $x$'s parent,
and then rotate at $x$. Sleator and Tarjan \cite{SPLAY_JOURNAL} assigned the
respective names \emph{zig}, \emph{zig-zag} and \emph{zig-zig} to these three
cases. The series of splay steps that bring $x$ to the root are collectively
called to as \emph{splaying at $x$}, or simply \emph{splaying $x$}. The three
cases are depicted in Figure
\ref{fig:SplaySteps}.

The \emph{cost} of splaying a single item $x$ in $T$ is defined to be
$d_T(x)+1$.\footnote{We absorb the search cost into the rotations.} If
$X=(x_1,\dots,x_m)$ is a sequence of requested keys in $T$ then the cost of
splaying $X$ starting from $T$ is defined as $m+\sum_{i=1}^m d_{T_{i-1}}(x_i)$,
where $T_0=T$, and for $1\le i \le m$, we form $T_i$ by splaying $x_i$ in
$T_{i-1}$. To perform \emph{insertion splaying}, insert a key into the tree and
then splay the newly created node. The cost of an insertion splay is the cost
splaying the new node.

While splaying individual items can cost $\Omega(n)$, the \emph{total} cost of
splaying $m$ requested items in a tree of size $n>0$ is $O((m+n)\log n)$.
Hence, the worst case cost of a splay operation, \emph{amortized} over all the
requests, is the same as any balanced binary search tree. This is perhaps
surprising for an algorithm that keeps no record of balance information.

What makes Splay truly remarkable is how it takes advantage of ``latent
structure'' in the request sequence, and provides more than simple
``worst-case'' guarantees. As just one example, if $t_X(i)$ is the number of
different items accessed before access $i$ since the last access to item $x_i$
(or since the beginning of the sequence if $i$ is the first access to $x_i$),
then the cost to splay $X$ starting from $T$ is $O(n\log n
+\sum_{j=1}^m\log(t_X(j)+1))$ \cite{SPLAY_JOURNAL}.\footnote{Note that $O(\log
n)$ amortized cost per splay is a corollary of this.} (This is called the
``working set'' property.) Thus, Splay exploits ``temporal locality'' in the
access pattern.

Splay simultaneously exploits ``spatial'' locality, as shown by the following
theorem (originally conjectured in \cite{SPLAY_JOURNAL}) that we will use later
on:
\begin{theorem}[Dynamic Finger \cite{DYNAMIC_FINGER_1,DYNAMIC_FINGER_2}]
\label{thm:DynamicFinger}
Let the \emph{rank} of $x$ in $T$, denoted $r_T(x)$, be the number of nodes in
$T$ whose keys are less than or equal the key in $x$. The cost of splaying
$X=(x_1,\dots,x_m)$ starting from $T$ is $O(|T|+m+\DF_T(X))$, where
$\DF_T(X)\equiv\sum_{i=2}^m \log_2(|r_T(x_i)-r_T(x_{i-1})|+1)$.
\end{theorem}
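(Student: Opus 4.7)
The standard tool for bounding splay cost is the access lemma: for any positive weights $w(x)$ on the nodes, the amortized cost of splaying $x_i$, with respect to the potential $\Phi=\sum_{y}\log_2 s(y)$ where $s(y)=\sum_{z\text{ in subtree of }y}w(z)$, is at most $O(\log(W/w(x_i))+1)$ where $W=\sum_y w(y)$. My plan is to apply this lemma with a \emph{moving} weight assignment centered at a ``finger,'' namely the most recently accessed key, and to pay for each reweighting via an auxiliary potential.

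Concretely, before access $i$ I would set $w_i(y)=1/(|r_T(y)-r_T(x_{i-1})|+1)^2$, so that $W_i=\Theta(1)$ and $w_i(x_i)=\Theta(1/(|r_T(x_i)-r_T(x_{i-1})|+1)^2)$. The access lemma then yields an amortized cost of $O(\log(|r_T(x_i)-r_T(x_{i-1})|+1)+1)$ per access, which telescopes to the desired $O(|T|+m+\DF_T(X))$ bound. The initial potential is $O(|T|\log|T|)$, which we would need to squeeze down to $O(|T|)$; this would require starting from a slightly different base weighting and using the fact that after the first access the finger-centered weights dominate.

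The main obstacle, and the reason the actual published proof runs to two long papers, is that the weight sequence $w_i$ changes between accesses, and each change shifts the potential $\Phi_i=\sum_y\log_2 s_i(y)$. A single rank shift of the finger by $\Delta$ can alter every subtree sum, producing a potential jump of up to $\Theta(|T|\log\Delta)$ if handled naively, which would destroy the telescoping. To control this, I would try to couple the weight change to a charging scheme in which each unit of potential increase is paid for either by the $\log(\Delta+1)$ budget of the current access or by an auxiliary ``finger movement'' potential that amortizes long jumps against nearby subsequent accesses. A cleaner alternative, and the one I would try first, is to keep the weights \emph{fixed} throughout but chosen relative to an unknown future trajectory; an exchange/relabeling argument over pairs of adjacent accesses would then let us assume the worst-case trajectory realizes $\DF_T(X)$.

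I expect the hardest step to be the reweighting bookkeeping: controlling how rotations on the left and right spines interact with a weight scheme that drops off quadratically from a moving center. Cole et al.\ handle this with an intricate token argument involving ``left-'' and ``right-hand'' fingers and multiple auxiliary potentials; since this theorem is invoked as a black box in what follows, I would not attempt to reconstruct that machinery here, and would instead cite \cite{DYNAMIC_FINGER_1,DYNAMIC_FINGER_2} for the technical content and use the statement as stated.
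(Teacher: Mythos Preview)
The paper does not prove Theorem~\ref{thm:DynamicFinger} at all: it is stated with citations to \cite{DYNAMIC_FINGER_1,DYNAMIC_FINGER_2} and then used as a black box in Section~\ref{sec:BalancedTrees}. Your final paragraph lands on exactly this treatment, so in that sense your proposal matches the paper.

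That said, the bulk of your proposal is a sketch of the moving-weight access-lemma argument, and you are right to flag its central gap: the potential shift caused by recentering the weights after each access is not controlled by anything in your outline, and the ``cleaner alternative'' of fixed weights chosen against a future trajectory is not developed enough to assess. These are precisely the obstacles that make the actual proof span two papers, as you note. So the sketch is a fair orientation to why the theorem is hard, but it is not a proof and you correctly do not claim it as one. For the purposes of this paper, simply citing the result---as you conclude and as the paper does---is the appropriate move.
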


In fact, the properties of Splay inspired the authors of \cite{SPLAY_JOURNAL}
to speculate on a much stronger possibility: that Splay's cost is always within
a constant factor of the ``optimal'' way of executing the requests. Formally,
an \emph{execution} $E$ for $(X,T)$ comprises the following. Let $T_0=T$, and
for $1\le i\le m$, we perform some number $e_i\ge 0$ of rotations starting from
$T_{i-1}$ to form $T_i$, followed by a search for $x_i$. The \emph{cost} of
this execution is $\sum_{i=1}^m(1+e_i+d_{T_{i-1}}(x_i))$. The \emph{optimal
cost} $\operatorname{OPT}(X,T)\equiv\min\{\operatorname{cost}(E)\mid\text{$E$
executes $(X,T)$}\}$. The following conjecture has spawned a great deal of
related research (see \S\ref{sec:RelatedWork}):

\begin{conjecture}[Dynamic Optimality \cite{SPLAY_JOURNAL}]
$\operatorname{cost}_{\text{splay}}(X,T)=O(\operatorname{OPT}(X,T))$.
\end{conjecture}

The conjecture remains open. In fact, there is no sub-exponential time
algorithm \emph{whatsoever} that is known to compute, even to within a constant
factor, the cost of an optimum binary search tree execution for an instance.
There are several known lower bounds \cite{GEOMETRY,WILBER}, none known to be
tight (though some conjectured to be).

\subsubsection*{Pattern-Avoidance}

For simplicity, we restrict subsequent discussion to \emph{permutation} request
sequences (i.e. no key is requested twice). By \cite{PATTERN_AVOIDANCE}, any
algorithm that achieves optimal cost on all permutations can be extended to an
algorithm that is optimal for all request sequences.

An auxiliary question to determining if Splay (or any other algorithm) is
dynamically optimal is: ``what class(es) of permutations have optimum
executions with `low' cost?'' This issue is not a mere curiosity, as almost
every permutation of length $n$ has optimal execution cost $\Theta(n\log n)$
\cite{BST_EXECUTION_ENCODINGS}, a bound achieved by any balanced search tree.
Thus, in the absence of insertions or deletions, adjusting the tree after every
access only gives an advantage on a small subset of ``structured'' request
sequences. In addition, these structured request sequences provide candidate
counter-examples to dynamic optimality. In this work, we focus on certain
\emph{pattern-avoiding} permutations: those that do not contain any
subsequences of a specified type. More formally:\footnote{The following
definitions and theorems are taken from \cite[Chapter 1.3]{KOZMA_THESIS},
almost verbatim.}

 Two permutations $\alpha=(a_1,\dots, a_n)$ and
$\beta=(b_1,\dots, b_n)$ of the same length are \emph{order-isomorphic}
if their entries have the same relative order, i.e. $a_i<a_j\iff
b_i<b_j$. For example, $(5,8,1)$ is order-isomorphic to $(2,3,1)$. A
sequence $\pi$ \emph{avoids} a sequence $\alpha$ (or is called
\emph{$\alpha$-avoiding}) if it has no subsequence that is order-isomorphic
with $\alpha$. If $\pi$ is $\alpha$-avoiding then all subsequences of $\pi$ are
$\alpha$-avoiding. We use $\pi\setminus\alpha$ as shorthand for ``an
(arbitrary) permutation $\pi$ that avoids $\alpha$.'' Both preorders and
postorders may be characterized as pattern-avoiding permutations:

\begin{lemma}[Lemma 1.4 from \cite{KOZMA_THESIS}] For any permutation $\pi$: \item
\begin{thmenum}
\item\label{lem:Preorders231} $\pi=\Preorder(T)$ for some binary search tree $T$ $\iff\pi$ avoids $(2,3,1)$.
\item\label{lem:Postorders312} $\pi=\Postorder(T)$ for some binary search tree $T$ $\iff\pi$ avoids $(3,1,2)$.
\end{thmenum}
\end{lemma}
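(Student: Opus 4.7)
The plan is to prove both equivalences by induction on $|\pi|$, leveraging the decomposition of a preorder (respectively postorder) at the root. If $r=\Root(T)$, then $\Preorder(T)=(r)\oplus\Preorder(L)\oplus\Preorder(R)$, where the second block consists of all entries less than $r$ and the third of all entries greater than $r$; so the two blocks occupy contiguous ranges of positions determined by comparison with $r=\pi_1$. Postorder admits the mirror decomposition with $r=\pi_n$ at the end. This contiguity is what drives every case.

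For the forward direction of (a), suppose $\pi=\Preorder(T)$ and, for contradiction, that there are indices $i<j<k$ with $\pi_k<\pi_i<\pi_j$. If $i=1$, then $\pi_i=r$ and $\pi_j>r$ places $\pi_j$ inside $\Preorder(R)$, which is a suffix; since $k>j$, we would have $\pi_k\in\Preorder(R)$ as well, forcing $\pi_k>r=\pi_i$, contradicting $\pi_k<\pi_i$. If $i>1$, then $\pi_i$ lies in one of the two contiguous blocks $\Preorder(L)$ or $\Preorder(R)$; whichever block contains $\pi_i$, the ordering $\pi_k<\pi_i<\pi_j$ together with the positions $i<j<k$ and the fact that ``less than $r$'' entries precede ``greater than $r$'' entries forces $\pi_j$ and $\pi_k$ into that same block, and the inductive hypothesis applied to $\Preorder(L)$ or $\Preorder(R)$ finishes the case. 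For the reverse direction of (a), pick $r=\pi_1$; if there existed $1<j<k$ with $\pi_j>r>\pi_k$, the triple $(\pi_1,\pi_j,\pi_k)$ would be order-isomorphic to $(2,3,1)$, contradicting the hypothesis. Hence $\pi\setminus\{\pi_1\}$ splits into a prefix $\pi^L$ of values less than $r$ followed by a suffix $\pi^R$ of values greater than $r$. Each is $(2,3,1)$-avoiding as a subsequence of $\pi$, so inductively $\pi^L=\Preorder(L)$ and $\pi^R=\Preorder(R)$ for BSTs $L,R$; attaching these as subtrees of a root with key $r$ produces a BST whose preorder is $\pi$.

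Part (b) is proved by the symmetric argument with $r=\pi_n$ playing the role of the root and a $(3,1,2)$-pattern ``$\pi_j<\pi_n<\pi_i$'' with $i<j<n$ serving as the forbidden witness in the reverse direction. I do not anticipate any real obstacle: the only subtlety is checking, in the forward direction, that a hypothetical bad triple cannot straddle the boundary between the left-subtree block and the right-subtree block, and in each case this is ruled out purely by comparing the three values of the triple with $r$ and invoking contiguity. The induction base $|\pi|\le 2$ is trivial, since no pattern of length $3$ fits.
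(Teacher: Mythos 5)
Your proof is correct and complete. The paper itself only gives a one-line sketch deferring to Kozma's thesis (a bijection for one direction and a contradiction argument for the other), and the underlying idea there — root decomposition with the key observation that the left- and right-subtree blocks are contiguous and separated in value by the root — is exactly what you carry out in full by induction.
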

\begin{proof}[Sketch]
For preorders, Kozma builds a bijection between binary search trees and
$(2,3,1)$-avoiding sequences, and uses a simple argument by contradiction to
show preorders avoid $(2,3,1)$ \cite{KOZMA_THESIS}. The proof for postorders is
a nearly symmetric variation of this argument.
\end{proof}

\section{Related Work}\label{sec:RelatedWork}

The first result about Splay's behavior on pattern-avoiding request sequences
was the \emph{sequential access} theorem \cite{SEQUENTIAL_ACCESS}: the cost of
splaying the nodes of $T$ in order is $O(|T|)$. This is a special case of a
corollary\footnote{A priori, the traversal conjecture follows from dynamic
optimality \emph{conditioned} on Splay being optimal with low ``additive
overhead.'' The authors recently proved that this corollary is actually
\emph{unconditional} \cite{A_NEW_PATH}.} of dynamic optimality:
\begin{conjecture}[Traversal \cite{SPLAY_JOURNAL}]
There exists $c>0$ for which the cost of splaying $\Preorder(T)$ starting from
$T'$ is at most $c|T|$ for all pairs of binary search trees $T, T'$ with the
same keys.
\end{conjecture}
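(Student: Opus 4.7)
The plan is to combine the pattern-avoidance techniques used for items \ref{item:LinearTimePreorder} and \ref{item:LinearTimePostorder} with a new mechanism for handling an arbitrary starting tree $T'$. I would begin by revisiting the potential function sketched in the abstract for the insertion setting: a count of already-inserted nodes that lie on access paths to uninserted nodes. In the insertion case, this potential is bounded by $n$, and the $(2,3,1)$-avoidance of preorders (Lemma~\ref{lem:Preorders231}) forces each expensive splay step to strip nodes off of these access paths, yielding an $O(n)$ amortized bound. My first step would be to generalize this potential to depend on the current tree $T_i$ rather than an insertion history, so that the same counting argument can be repurposed when $T'$ is already populated.

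My second step would be to control the initial ``adaptation'' cost, during which splay restructures $T'$ into a shape compatible with $\Preorder(T)$. A natural tactic is to use a recursive decomposition following the preorder itself: splay $\Root(T)$ first, then recurse on the left subtree $L$ and then on the right subtree $R$. After the first splay, $r=\Root(T)$ sits at the root of the current tree and the keys less than $r$ form its left subtree; so the problem reduces to splaying $\Preorder(L)$ in the left subtree followed by $\Preorder(R)$ in the right subtree. A clean inductive argument along these lines would charge the initial $O(d_{T'}(r))$ cost to the top-level recursion and close out with an $O(|T|)$ bound, provided the subtree costs combine linearly.

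The main obstacle, and the reason this conjecture has remained open since \cite{SPLAY_JOURNAL}, is that $T'$ can be adversarially unrelated to $T$. Dynamic Finger (Theorem~\ref{thm:DynamicFinger}) suffices when $\Preorder(T)$ has small symmetric-order jumps, and hence yields Theorem~\ref{thm:BalancedTraversal} for balanced $T$; but a preorder of an unbalanced $T$ can contain jumps of size $\Omega(n)$ (for example when $T$ consists of a root with a large left subtree followed by a long right spine, whose preorder leaps from the deepest left descendant back to the root). The pattern-avoidance-based potential for insertion splaying does not immediately transfer either, because the shape of $T'$ can force splay to make wasted restructuring moves unrelated to the eventual preorder. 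Since the Traversal Conjecture is implied by Dynamic Optimality, any successful proof must rely on structural features specific to $(2,3,1)$-avoiding permutations; my best candidate is a hybrid potential that combines the access-path counter from insertion splaying with a measure of ``distance'' between the current tree and a canonical tree built from $\Preorder(T)$, analyzed via the recursive decomposition described above.
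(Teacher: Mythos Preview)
The statement you are attempting to prove is the Traversal Conjecture, which the paper does \emph{not} prove; it is stated explicitly as an open conjecture from \cite{SPLAY_JOURNAL}. The paper only establishes two special cases: $T'=T$ (equivalently, insertion splaying, Theorem~\ref{thm:InsertionSplayingPreorders}) and $T\in\BB[\alpha]$ (Theorem~\ref{thm:BalancedTraversal}). There is therefore no ``paper's own proof'' to compare against, and your proposal cannot be assessed as a correct alternative argument.

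More to the point, what you have written is not a proof but a research outline, and you yourself flag the central gap (``the reason this conjecture has remained open''). Your second step in particular does not go through as stated. After splaying $r=\Root(T)$ in $T'$, the keys of $L$ do sit in the left subtree of $r$; but subsequent splays of elements of $\Preorder(L)$ bring those elements to the root of the \emph{entire} tree, not to the root of the left subtree. Each such splay drags $r$ (and hence the attachment point of $R$) around, so the two recursive subproblems do not decouple: the cost of splaying $\Preorder(L)$ in the full tree is not simply the cost of splaying it in the left subtree plus $O(|L|)$ overhead, and the shape in which $R$ is left after processing $L$ is not controlled by your induction hypothesis. Absent a mechanism that bounds this interaction, the recursion yields no useful recurrence. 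Your ``hybrid potential'' suggestion is plausible as a direction, but as written it is a hope, not an argument; a genuine proof would need to exhibit the potential and verify the amortized inequality for every splay step, which is precisely the part that has resisted attack for four decades.
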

Theorem~\ref{thm:InsertionSplayingPreorders} and \cite{SPLAY_PREORDER} is
another special case, when $T=T'$. In \S\ref{sec:MainResults} we prove a new
special case: when $T$ is $\alpha$-weight balanced.

Interest in the behavior of binary search tree algorithms on ``structured''
request sequences was revived by Seth Pettie's analysis of the performance of
Splay-based deque data structures using Davenport-Schinzel sequences
\cite{DAVENPORT_SCHINZEL}, and his later reproof of the sequential access
theorem via the theory of forbidden submatrices \cite{FORBIDDEN_SUBMATRIX}.

This analysis was later adapted to and greatly extended for another binary
search tree algorithm, colloquially known as ``Greedy,'' that was first
proposed as an \emph{off}-line algorithm independently by Lucas
\cite{CANONICAL_FORMS} and Munro \cite{MUNRO_GREEDY}. Greedy is widely
conjectured to be dynamically optimal, and is known to have many of the same
properties of Splay, including the working set \cite{GREEDY_ACCESS_LEMMA} and
dynamic finger \cite{GREEDY_DYNAMIC_FINGER} bounds.

Greedy was later recast as an on-line algorithm in a ``geometric'' view of
binary search trees \cite{GEOMETRY}. This geometric view of Greedy is
especially amenable to forbidden submatrix analysis. In
\cite{PATTERN_AVOIDANCE}, Chalermsook et. al. show that Greedy has
nearly-optimal run-time on a broad class of pattern-avoiding permutations.
Moreover, they demonstrate that if Greedy is optimal on a certain class of
``non-decomposable'' permutations then it is dynamically optimal. Chalermsook
et al.'s analysis was later simplified in \cite{SIMPLER_PATTERN_AVOIDANCE}.

\section{Insertion Splaying Preorders and Postorders}\label{sec:MainResults}

If $\pi=(p_1,\dots,p_n)$ is a permutation then the \emph{insertion tree for
$\pi$}, denoted $\BST(\pi)$, is the binary search tree obtained by starting
from an empty tree and inserting keys in order of their first appearance in
$\pi$.
\begin{lemma}\label{lem:AncestorPrecedence}
If $x$ is a proper ancestor of $y$ in $\BST(\pi)$ then $x$ precedes $y$ in
$\pi$.
\end{lemma}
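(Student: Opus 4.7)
The plan is to argue directly from the definition of insertion into a binary search tree: every newly inserted key becomes a leaf at the end of its search path, so at the moment of its insertion it has no descendants in the tree. Consequently, every descendant a node ever acquires must be inserted strictly later.

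More concretely, I would proceed as follows. Let the keys of $\pi$ be inserted in the order of first appearance $k_1, k_2, \dots, k_n$, and let $T_i$ denote the tree after the first $i$ insertions, so $\BST(\pi) = T_n$. I would first observe the monotonicity property that the ancestor relation is preserved by insertion: once $u$ is an ancestor of $v$ in $T_i$, it remains an ancestor in $T_j$ for all $j \ge i$, since the insertion procedure only appends leaves and never rearranges existing pointers. Combined with the fact that $k_i$ is a leaf (and thus has no descendants) in $T_i$, this yields the key claim: if $u$ is a proper ancestor of $v$ in $\BST(\pi)$, then $u$ must have been inserted no later than $v$.

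Applying this to $x$ and $y$: since $x$ is a proper ancestor of $y$ in $\BST(\pi) = T_n$, and $x$ has no descendants the moment it is inserted, $y$ must have been inserted strictly after $x$. Because the insertion order is exactly the order of first appearance in $\pi$, $x$ precedes $y$ in $\pi$.

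There is no real obstacle here; the only thing to be careful about is stating precisely that insertion creates a leaf and never restructures the existing tree, so that the ancestor/descendant relation among previously inserted nodes is frozen once established. With that observation in hand, the lemma follows immediately, and the argument is short enough that it can be given in a sentence or two once the monotonicity of the ancestor relation under leaf-insertion is spelled out.
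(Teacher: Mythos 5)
Your proof is correct and follows essentially the same approach as the paper: both rest on the observation that BST insertion only appends leaves, so the ancestor set of a node is fixed at insertion time, whence ancestors must be inserted earlier. The paper phrases it by noting $y$'s parent $z$ and its ancestors all lie in $\BST(\pi_{\prec y})$, while you make the underlying monotonicity of the ancestor relation explicit, but the argument is the same.
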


\begin{proof}
Let $\pi_{\prec y}$ denote the prefix of $\pi$ containing the elements preceding
$y$. By construction, $y$ is inserted as a child of some node $z$ in
$\BST(\pi_{\prec y})$. Every proper ancestor of $y$ is an ancestor of
$z$, thus $x\in\BST(\pi_{\prec y})$. Hence, $x$ precedes $y$.
\end{proof}

Insertion splaying $\pi$ has the same cost as splaying $\pi$ starting from
$\BST(\pi)$.\footnote{This is because the manner in which Splay restructures
the access path is independent of nodes outside the path.} For the purposes of
analysis we will assume that, initially, every node in $\BST(\pi)$ is marked as
\emph{untouched}. An insertion splay marks the node as \emph{touched}, and then
splays the node. The touched nodes form a connected subtree containing the
root, called the \emph{touched subtree}. The untouched nodes form subtrees each
of which contains no touched node. Call an untouched node with a touched parent
a \emph{sub-root}. The subtrees rooted at sub-roots have identical structure in
both the splayed tree and $\BST(\pi)$. By Lemma \ref{lem:AncestorPrecedence},
the next node to be touched is always a sub-root.

For $1\le i\le n$, form $T_i$ by touching and then splaying $p_i$ in $T_{i-1}$,
where $T_0=\BST(\pi)$ starts with all nodes untouched. At any time we define
the \emph{potential} to be the twice the number of touched nodes that are
ancestors of sub-roots, and we define $\Phi_i$ to be the potential of $T_i$.
The \emph{amortized} cost of splaying $p_i$ in $T_{i-1}$ is defined as
$c_i=t_i+\Phi_i-\Phi_{i-1}$, where $t_i$ denotes the actual cost. By a standard
telescoping sum argument, the cost of insertion splaying $\pi$ is $\sum_{i=1}^n
t_i = \sum_{i=1}^n c_i + \Phi_0 - \Phi_n$ \cite{AMORTIZED_TIME}. Since
$\Phi_0=\Phi_n=0$, an upper bound on amortized cost provides an upper bound on
the actual cost.

Pattern-avoidance provides certain information about both $\BST(\pi)$ and about
which sub-root can be touched next. We exploit this information in the next two
sections.

\subsubsection*{Preorders}

There are no restrictions on the possible structure of preorder insertion trees
as $\BST(\Preorder(T))=T$.\footnote{In fact, this property is shared by
\emph{any} permutation $\pi$ for which every node in $T$ appears in $\pi$
before those in its left and right subtrees.} However, the manner in which
sub-roots are chosen is particularly simple.

\begin{lemma}\label{lem:PreorderSubroots}
If $\pi\setminus(2,3,1)=(p_1,\dots,p_n)$ is a preorder then, for $1\le i\le n$,
$p_i$ is the smallest sub-root of $T_{i-1}$, where all nodes begin untouched in
$T_0=\BST(\pi)$ and $T_i$ is formed by touching and splaying $p_i$ in $T_{i-1}$.
\end{lemma}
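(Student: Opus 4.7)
My plan is to proceed by contradiction, using the $(2,3,1)$-avoidance of $\pi$ (Lemma~\ref{lem:Preorders231}) to extract a forbidden pattern directly. Suppose that at step $i$ there is a sub-root $s$ of $T_{i-1}$ with $s < p_i$. Since $s$ is untouched at this moment and the touched nodes are exactly $\{p_1,\dots,p_{i-1}\}$, we have $s = p_k$ for some $k > i$.

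First I would establish the structural fact that neither $s$ nor $p_i$ is an ancestor of the other in $\BST(\pi) = T$. This uses the connectivity of the touched subtree: if $s$ were a proper ancestor of $p_i$, then $s$ would lie on the root-path of $p_i$'s (touched) parent, forcing $s$ itself to be touched --- a contradiction. The reverse direction is symmetric.

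Granting this, let $\ell = \LCA_T(s,p_i)$. The preceding step makes $\ell$ a proper ancestor of both nodes, so symmetric order together with $s < p_i$ yields $s < \ell < p_i$. Since $\ell$ is a proper ancestor of $p_i$, it lies on $p_i$'s root-path, all of whose nodes are touched (the touched subtree is connected and contains $p_i$'s parent); hence $\ell = p_m$ for some $m < i$. Thus $(\ell, p_i, s)$ appears in $\pi$ at positions $m < i < k$ with values satisfying $s < \ell < p_i$, a subsequence order-isomorphic to $(2,3,1)$ and contradicting Lemma~\ref{lem:Preorders231}.

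The main obstacle is less a difficulty than a subtlety: it is the connectivity step ruling out an ancestor-descendant relation between $s$ and $p_i$. Once that is in hand, the LCA argument almost writes itself, and the forbidden $(2,3,1)$ pattern emerges by inspection of the three positions $m < i < k$.
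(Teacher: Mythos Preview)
Your argument is correct and mirrors the paper's: assume a smaller sub-root exists, find a common ancestor lying strictly between the two sub-roots in symmetric order, observe that it must precede both in $\pi$, and read off a $(2,3,1)$ pattern. The only difference is that you take the lowest common ancestor in $T=\BST(\pi)$ (after first ruling out an ancestor--descendant relation there), whereas the paper takes it in $T_{i-1}$ via the touched parents $a,b$ of the two sub-roots; either choice yields a touched node $r$ with $s<r<p_i$, and the contradiction is identical.
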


\begin{proof}
The statement is vacuously true for $i=1$. We prove for $i>1$ by contradiction,
as follows. Suppose $T_{i-1}$ has some sub-root $q$ that is smaller than $p_i$.
Since $q$ and $p_i$ are \emph{both} sub-roots in $T_{i-1}$, they are both
children of respective (though not necessarily distinct) nodes $a$ and $b$ in
$T_{i-1}$. Let $r=\LCA_{T_{i-1}}(a,b)$. Since $q\ne a$ and $p_i\ne b$, all of
$p_i$, $q$ and $r$ are distinct nodes in $T_{i-1}$, and furthermore $q<r<p_i$.
By Lemma \ref{lem:AncestorPrecedence}, $r$ precedes both $q$ and $p_i$ in
$\pi$, and by construction $p_i$ precedes $q$. We thus have $(r,p_i,q)$ is a
subsequence of $\pi$. But $(r,p_i,q)$ is order-isomorphic with $(2,3,1)$,
contradicting $\pi\setminus(2,3,1)$.
\end{proof}

\begin{figure}
\centering
\includegraphics[width=.85\textwidth]{./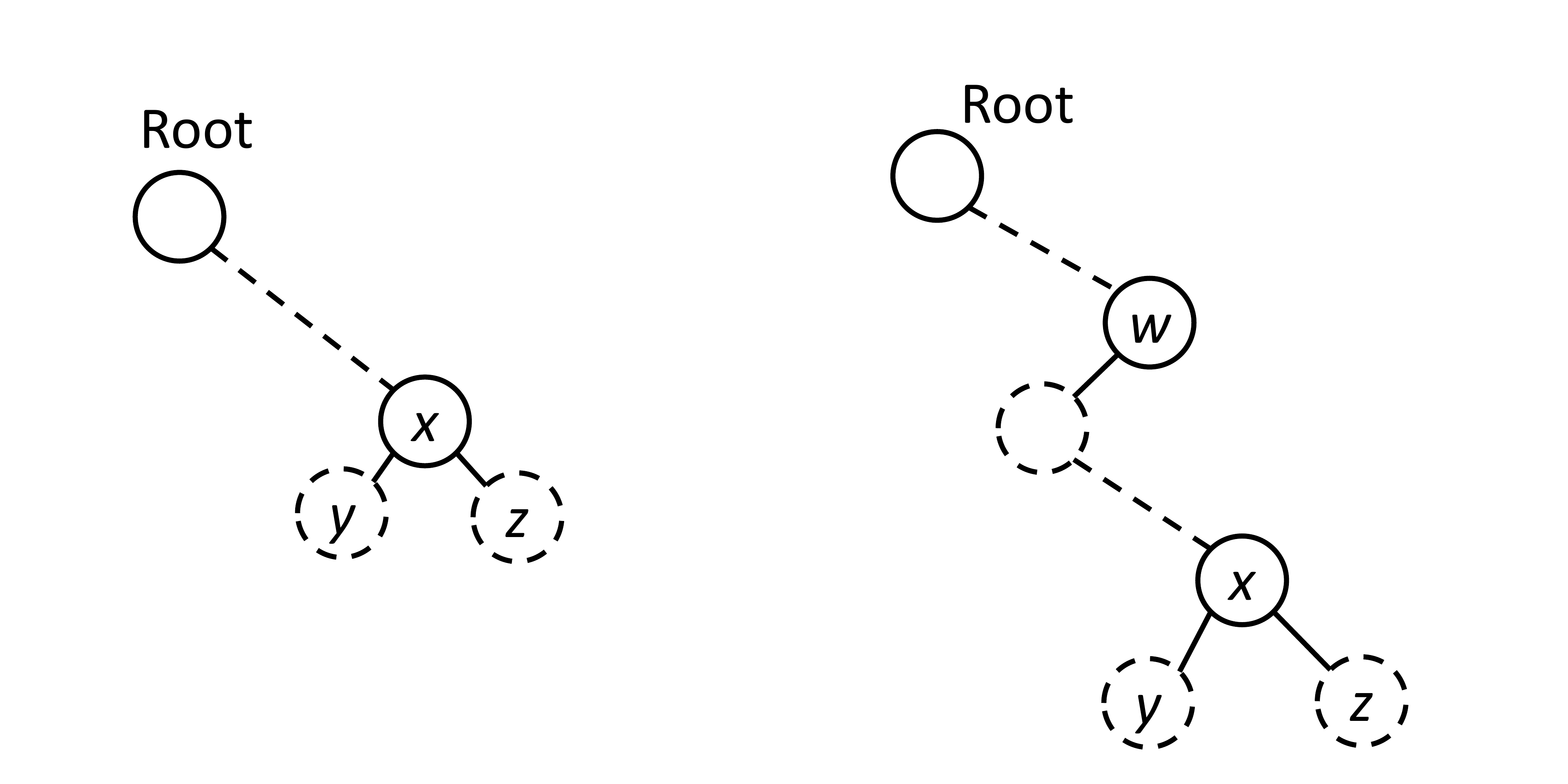}
\caption{Possible locations for the next sub-root $x$ to be insertion splayed
in $\pi\setminus(2,3,1)$. The case on the left occurs when the next splayed
node has left-depth $0$, and the case on the right occurs when it has
left-depth $1$. Dashed nodes may or may not be present, and any number of nodes
may lie on the paths denoted by dashed lines.}\label{fig:PreorderSubroots}
\end{figure}

\begin{theorem}\label{thm:InsertionSplayingPreorders}
Insertion splaying $\Preorder(T)$ keeps each sub-root at left-depth at most $1$
and takes $O(1)$ amortized time per splay operation.
\end{theorem}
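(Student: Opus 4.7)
My plan is to prove both assertions simultaneously by induction on $i$, establishing the structural invariant (every sub-root of $T_i$ has left-depth at most $1$) while bounding the amortized cost $c_i = t_i + \Phi_i - \Phi_{i-1}$ by $O(1)$. By Lemma~\ref{lem:PreorderSubroots}, $p_i$ is the smallest sub-root of $T_{i-1}$, so every other sub-root has strictly larger key. By the inductive hypothesis, $p_i$ has left-depth at most $1$, putting its access path in one of the two configurations of Figure~\ref{fig:PreorderSubroots}: either purely on the right spine (left-depth $0$), or with exactly one left turn at some ancestor $a_k$ (left-depth $1$). The splay therefore decomposes into a sequence of right-right zig-zigs, at most one left-right zig-zag (at the left turn, if present), and at most one terminal zig.

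For the structural claim, I would trace where each sub-root of $T_i$ lies after the splay. The old sub-roots $s > p_i$ all land in $p_i$'s right subtree, and their positions preserve left-depth at most $1$ because the right-right zig-zigs leave $p_i$'s right subtree undisturbed, and a zig-zag moves $a_k$ into $\Right(p_i)$ while keeping $a_k$'s right subtree intact. The new sub-roots $L_i$ (if present) and $R_i$ (if present) land on the fringes of $p_i$'s restructured left and right subtrees respectively, each at left-depth $0$ or $1$; verifying this is a direct case analysis of where the original $\Left(p_i)$ and $\Right(p_i)$ subtrees migrate under each zig-zig, zig-zag, and zig.

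For the amortized cost, the key observation, enabled by Lemma~\ref{lem:PreorderSubroots}, is that every right-right zig-zig releases its grandparent from the potential. Concretely, in such a zig-zig with grandparent $y$, parent $z$, and $p_i$, one has $y < z < p_i$ since $y, z$ both lie on a right-right portion of $p_i$'s access path and are therefore smaller than $p_i$. After the zig-zig, $y$'s new subtree consists of $y$ together with the former left subtrees of $y$ and $z$, all of whose keys are strictly less than $y < p_i$. The only sub-root of $T_i$ with key less than $p_i$ is $L_i$, which lies elsewhere in the tree; hence $y$ has no sub-root descendant in $T_i$, and its contribution to $\Phi$ drops from $2$ to $0$. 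Summing over all right-right zig-zigs yields a potential decrease of at least $d_{T_{i-1}}(p_i) - O(1)$, while touching $p_i$, the optional zig-zag, and the optional terminal zig each change the potential by $O(1)$. Combining gives $c_i \le (d_{T_{i-1}}(p_i) + 1) - (d_{T_{i-1}}(p_i) - O(1)) + O(1) = O(1)$.

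The main obstacle I anticipate is the case analysis needed to verify the structural invariant, particularly for the zig-zag case where both $L_i$ and $R_i$ must be tracked through subtle splay restructuring that intermingles both sides of $p_i$'s access path, together with confirming that the zig-zag (which does not reliably release any touched node) has only $O(1)$ net effect on the potential rather than swamping the zig-zig savings.
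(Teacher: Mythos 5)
Your proposal follows essentially the same route as the paper: induction on $i$ with the structural invariant (sub-roots at left-depth $\le 1$) combined with the potential argument, using Lemma~\ref{lem:PreorderSubroots} to pin down $p_i$ as the smallest sub-root and then arguing that each zig-zig pays for itself by evicting its grandparent from the set of touched ancestors of sub-roots. The paper organizes the same content as two explicit cases (left-depth $0$ and left-depth $1$ for $x$), tracks the left-depths of the new sub-roots $y$ and $z$ through each step type, and concludes with the same constant bound.

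One small correction to your potential accounting: after a right-right zig-zig with grandparent $y$ and parent $z$, the subtree now rooted at $y$ consists of $y$, the former left subtree of $y$, and the former left subtree of $z$ (which becomes $y$'s new right subtree). These keys are strictly less than $z$, not strictly less than $y$. The conclusion you want still holds — the only sub-root with key below $p_i$ is $L_i$, and since $L_i$ sat in $z$'s right subtree in $T_{i-1}$ we have $L_i > z$, so $L_i$ is not a descendant of $y$ — but the inequality you stated is the wrong one and would not rule out $L_i$ landing in $z$'s former left subtree if taken at face value. Beyond that, the remaining work you flag (tracking $L_i$, $R_i$, and the sub-roots in $w$'s right subtree through the single zig-zag and the terminal zig) is exactly the case analysis the paper carries out; your outline of how to do it is sound.
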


\begin{proof}
The theorem is trivial for the first insertion splay. The inductive hypothesis
is that every sub-root has left depth $0$ or $1$. Let $x$ be the next sub-root
to be splayed, and let $y$ and $z$ (either or both of which can be missing) be
its left and right children. Touching $x$ makes $y$ and $z$ into sub-roots.

Suppose $x$ has left depth $0$ before it is touched. Converting $x$ from
untouched to touched (without splaying it) increases the potential by at most
$2$ and gives the new sub-roots $y$ and $z$ left depths of $1$ and $0$,
respectively. (In this case they are the only two sub-roots.) Each splay step,
except possibly the last, is a zig-zig in which $x$ starts as a left child with
parent $p$ and grandparent $g$. After completing the zig-zig, $g$ is no longer
an ancestor of any untouched node, which decreases the potential by $2$. The
zig-zig also preserves the left depths of $y$ and $z$. ($y$ becomes the right
child of $p$.) No other sub-roots can increase left-depth, as $x$ is the
smallest sub-root. If the last splay step is a zig, the potential does not
change (although the length of the path to $y$ increases by $1$).

More complicated is the case in which $x$ has left depth $1$. Converting $x$
from untouched to touched (without splaying it) makes $y$ a sub-root of left
depth $2$ and $z$ a sub-root of left depth $1$. Let $w$ be the parent of the
ancestor of $x$ that is a left child. All other sub-roots are in the right
subtree of $w$, which is unaffected by splaying $x$. The splay of $x$ consists
of $0$ or more left zig-zigs, followed by a zig-zag (which can either
left-right or right-left), followed by zero of more left zig-zigs, followed
possibly by a zig. Each zig-zig reduces the potential by $2$ and preserves the
left depths of all sub-roots. The zig-zag does not increase the potential,
reduces the left depth of $y$ from $2$ to $1$, and that of $x$ from $1$ to $0$,
and preserves the left depth of $z$. Now $x$ has left depth $0$, and the
argument above applies to the remaining splay steps.

By Lemma \ref{lem:PreorderSubroots}, the next node to be splayed will be $y$
if present, otherwise $z$ if present, otherwise $w$ if present. All three of
these items have left-depth $0$ or $1$, hence an identical form to Figure
\ref{fig:PreorderSubroots}. Thus the hypothesis holds.

To obtain the constant factor, we observe that converting $x$ from untouched
to touched increases the potential by $2$. Each zig-zig step pays for itself:
it requires $2$ rotations, paid for by the potential decreasing by at least
$2$. The zig-zag requires $2$ rotations, and the zig requires $1$ rotation. If
the cost of a splay is the number of nodes on the splay path, equal to the
number of rotations plus $1$, we have an amortized cost of $6$ per splay.
\end{proof}

\subsubsection*{Postorders}

Postorder insertion trees are more restricted. A binary search tree $C$ is a
\emph{(left-toothed) comb} if the access path for $x\in C$ always comprises
some number $j\ge0$ of right children followed by some number $k\ge0$ of left
children. The nodes of $C$ are partitioned into \emph{teeth}, where every node
in the $i\textsuperscript{th}$ tooth has right-depth $i-1$. The shallowest node
in a tooth is called the \emph{head}. The insertion trees of postorders are
combs:

\begin{lemma}\label{lem:PostorderInsertionTrees}
If $\pi$ is a postorder then no left child in $\BST(\pi)$ has a right child.
\end{lemma}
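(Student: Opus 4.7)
The plan is a direct proof by contradiction, combining Lemma~\ref{lem:Postorders312} (postorders avoid $(3,1,2)$) with Lemma~\ref{lem:AncestorPrecedence} (ancestors precede descendants in the insertion sequence).

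First, I would suppose for contradiction that some left child $y$ in $\BST(\pi)$ has a right child $z$. Let $x=\Parent(y)$ in $\BST(\pi)$. I would then extract the key inequalities from symmetric order: since $y$ is the left child of $x$ we have $y<x$; since $z$ lies in the right subtree of $y$ we have $z>y$; and since $z$ lies in the left subtree of $x$ we have $z<x$. Hence $y<z<x$.

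Next, I would locate the three nodes in $\pi$. Both $y$ and $z$ are proper descendants of $x$, and $z$ is a proper descendant of $y$, so Lemma~\ref{lem:AncestorPrecedence} gives that $x$ precedes $y$ in $\pi$ and $y$ precedes $z$ in $\pi$. Therefore $(x,y,z)$ appears as a subsequence of $\pi$ in that order. Translating values into ranks, the triple has the relative order $3,1,2$, so it is order-isomorphic to $(2,3,1)$'s sibling pattern $(3,1,2)$. This contradicts the hypothesis that $\pi$ is $(3,1,2)$-avoiding, completing the proof.

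I do not anticipate a real obstacle here; the only care required is to correctly read off all three inequalities between $x$, $y$, and $z$ from the BST property before invoking ancestor-precedence, so that the resulting subsequence is guaranteed to realize the forbidden pattern rather than a different one. Once those inequalities are in hand the contradiction is immediate.
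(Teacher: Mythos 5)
Your proof is correct and follows the paper's own argument essentially step for step: assume a left child $y$ with right child $z$ and parent $x$, read off $y<z<x$ from symmetric order, use Lemma~\ref{lem:AncestorPrecedence} to place $(x,y,z)$ as a subsequence of $\pi$, and observe this realizes the forbidden pattern $(3,1,2)$. No meaningful difference from the paper's proof.
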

\begin{proof}
By contradiction. Let $y$ be a left child in $\BST(\pi)$ with right child $z$,
and let $x=\Parent(y)$. As $z$ is $y$'s right child, $y<z$. Similarly, as both
$y$ and $z$ are in $x$'s left subtree, $y<z<x$. By Lemma
\ref{lem:AncestorPrecedence}, $y$ can be an ancestor of $z$ only if $y$
precedes $z$ in $\pi$, and similarly $x$ must precede $y$. Thus, $(x,y,z)$ is a
subsequence of $\pi$ that is order-isomorphic to $(3,1,2)$. By Lemma
\ref{lem:Postorders312}, $\pi$ is not a postorder.
\end{proof}

While postorder insertion trees are less varied than for preorders, there may be many postorders with a given insertion tree. This affords some amount of freedom for choosing different sub-roots.

\begin{lemma}\label{lem:PostorderSubroots}
Let $\pi\setminus(3,1,2)=(p_1,\dots,p_n)$ be a postorder with insertion tree sequence $T_0,T_1,\dots,T_n$. For $1\le i\le n$, $p_i$ is either: \item
\begin{enumerate}[label=\upshape(\alph*)]
\item\label{item:NewHead} The \emph{single} sub-root greater than
$\max\{T_{i-1}\}$ (if present), or
\item\label{item:CombHead} The largest sub-root smaller than $\max\{T_{i-1}\}$ (if present).
\end{enumerate}
\end{lemma}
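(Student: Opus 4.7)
My plan is to identify $p_i$ as a sub-root, use the comb structure of postorder insertion trees to partition the sub-roots into those above and those below the maximum touched key, and then invoke $(3,1,2)$-avoidance to pin down which sub-root $p_i$ is in the ``below'' case.

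First, I would observe that $p_i$ must be a sub-root: by Lemma~\ref{lem:AncestorPrecedence} the parent of $p_i$ in $\BST(\pi)$ precedes $p_i$ in $\pi$ and is therefore touched by step $i-1$. Set $m=\max\{T_{i-1}\}$, the largest touched key. The base case $i=1$ is either vacuous or handled by the convention $\max\emptyset=-\infty$, so I focus on $i\ge 2$.

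Next I would appeal to Lemma~\ref{lem:PostorderInsertionTrees}: $\BST(\pi)$ is a comb whose right children all lie on a single ``spine'' descending from the root, with chains of left descendants (``teeth'') hanging off each spine node (``head''). Because the touched subtree is connected and contains the root, it meets the spine in a prefix of heads ending at some deepest touched head $h$, and it meets each touched tooth in a prefix of its left-chain; in particular $m=h$. The sub-roots therefore split into two kinds: at most one ``spine'' sub-root, namely the right child of $h$ if present, which is strictly greater than $m$; and some ``tooth'' sub-roots, each a left descendant of a touched head $h'\le h$ and hence strictly less than $m$. In particular, at most one sub-root exceeds $m$.

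If $p_i>m$, then $p_i$ must be the unique spine sub-root, placing us in case~\ref{item:NewHead}. Otherwise $p_i<m$ (equality is impossible since $p_i$ is untouched while $m$ is touched), and I claim $p_i$ is the largest sub-root below $m$. Suppose to the contrary that some sub-root $q$ satisfies $p_i<q<m$. Writing $m=p_j$ with $j<i$ (since $m$ is touched) and $q=p_k$ with $k>i$ (since $q$ is untouched), the three-term subsequence $(p_j,p_i,p_k)=(m,p_i,q)$ is order-isomorphic to $(3,1,2)$, contradicting $\pi\setminus(3,1,2)$. This places us in case~\ref{item:CombHead}. The main obstacle is really the preceding paragraph: the uniqueness of a sub-root above $m$ does not follow from pattern-avoidance alone but rests on the structural comb lemma; once that is granted, the finishing argument for case~(b) is a one-line appeal to $(3,1,2)$-avoidance, parallel in spirit to the proof of Lemma~\ref{lem:PreorderSubroots}.
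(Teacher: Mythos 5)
Your proof is correct and follows essentially the same strategy as the paper's: show that $p_i$ is a sub-root, use the comb structure (Lemma~\ref{lem:PostorderInsertionTrees}) to establish that at most one sub-root exceeds $m=\max\{T_{i-1}\}$, and rule out any sub-root strictly between $p_i$ and $m$ by exhibiting a forbidden $(3,1,2)$-pattern. The only difference is in the finishing step: the paper builds its pattern $(r,p_i,q)$ from the touched \emph{successor} $r$ of $q$ in the comb (which it obtains by observing that tooth elements are touched top-down, so a non-head sub-root's parent is touched), whereas you take the cleaner route of using $m$ itself as the large element, giving $(m,p_i,q)$ directly from ``$m$ touched, $p_i$ current, $q$ untouched.'' This sidesteps the tooth-successor bookkeeping and is a mild simplification, but both arguments lean on the same facts and have the same scope.
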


\begin{proof}
The result is vacuous for $i=1,2$. If $p_i$ is case \ref{item:NewHead}, we
merely note that if $p_i$ is a new maximum then it must be the right child of
the largest node in $\max\{T_{i-1}\}$. There can be at most one sub-root in
this position. Hence, $p_i$ is unique.

For the sake of contradiction, suppose $p_i$ is not of the form in case
\ref{item:NewHead} or \ref{item:CombHead}, and let $q$ be the largest sub-root
smaller than $\max\{T_{i-1}\}$. By Lemma \ref{lem:AncestorPrecedence}, the
items of each tooth are added in decreasing order. As $q$ is not the head of
its tooth, its successor $r$ must be in $T_{i-1}$, and furthermore $r$ precedes
both $p_i$ and $q$ in $\pi$. By construction, $(r,p_i,q)$ is a subsequence of
$\pi$. Yet this subsequence is isomorphic to $(3,1,2)$ since $p_i<q<r$,
contradicting Lemma \ref{lem:Postorders312}.
\end{proof}

\begin{figure}
\centering
\includegraphics[width=.85\textwidth]{./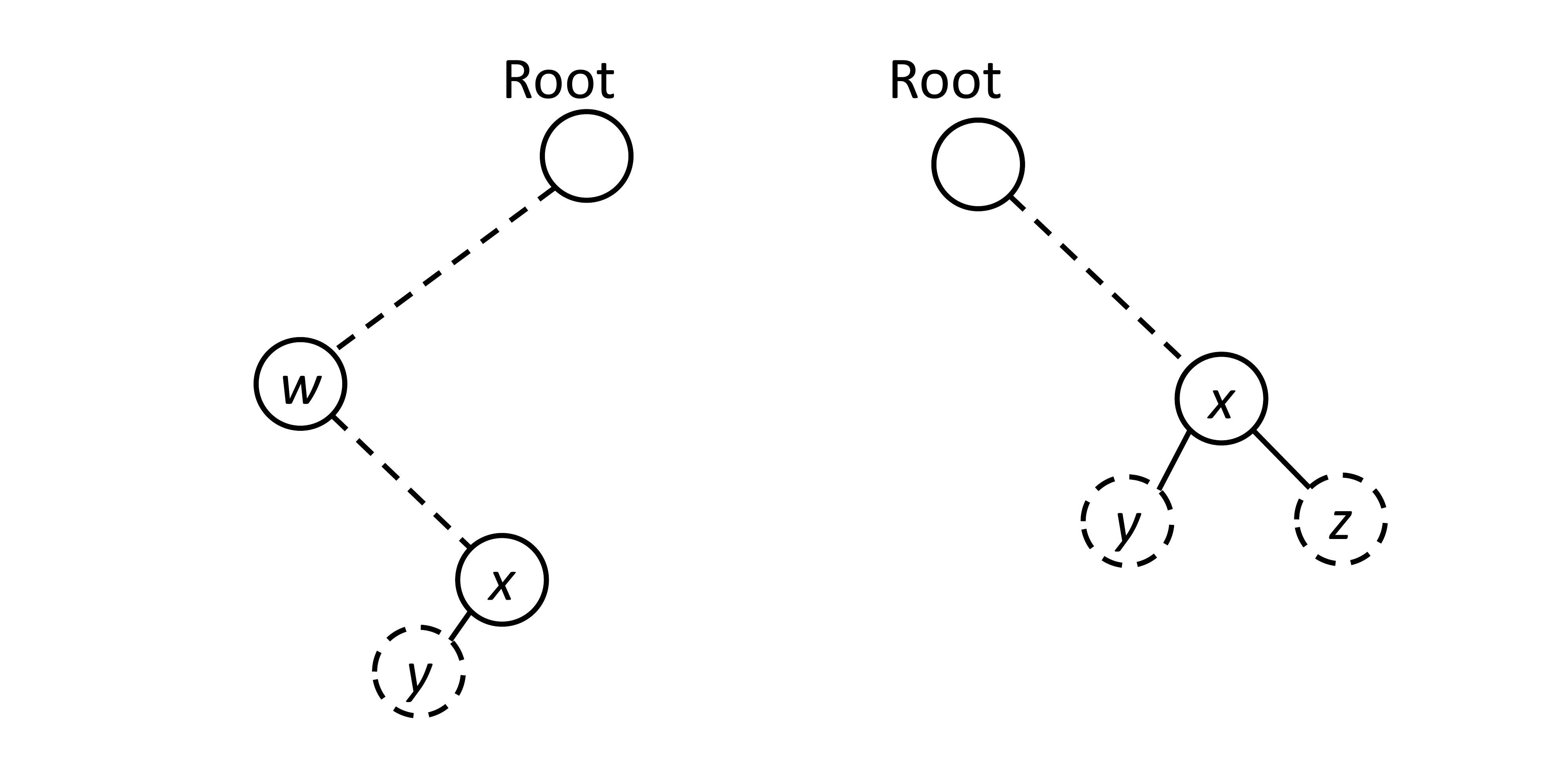}
\caption{Possible locations for the next sub-root $x$ to be insertion splayed
in $\pi\setminus(3,1,2)$. The case on the left occurs when the next splayed
node is less than the root, and the case on the right occurs when the next
sub-root is the new tree maximum. Dashed nodes may or may not be present, and
any number of nodes may lie on the paths denoted by dashed
lines.}\label{fig:PostorderSubroots}
\end{figure}

\begin{theorem}\label{thm:InsertionSplayingPostorders}
Insertion splaying postorders maintains the following invariants:
\begin{enumerate}
\item After each insertion splay, the path to every sub-root comprises $j\ge 0$
left pointers followed by $k\ge 0$ right pointers. (Furthermore, after the
first insertion, $k\ge 1$.)
\item The left-depth of every sub-root decreases from smallest to
largest.\footnote{The first two invariants dictate that the ancestors of
sub-roots form a \emph{right}-toothed comb.}
\item The splay operation takes constant amortized time.
\end{enumerate}
\end{theorem}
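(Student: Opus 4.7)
The plan is to adapt the potential-function argument from Theorem~\ref{thm:InsertionSplayingPreorders} to this setting: use the same potential (twice the number of touched nodes that are ancestors of at least one sub-root) and establish all three invariants simultaneously by induction on the number of insertion splays. The base case is immediate, since the first splay requires no rotations and Lemma~\ref{lem:PostorderInsertionTrees} (no left child in $\BST(\pi)$ has a right child) forces the children of $p_1$ exposed as sub-roots to sit in positions compatible with the claimed comb structure.

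For the inductive step, let $x$ be the next sub-root to be touched and splayed. By Lemma~\ref{lem:PostorderSubroots}, $x$ is either (a) the unique sub-root greater than the current maximum inserted key, or (b) the largest sub-root smaller than that maximum. Under the inductive hypothesis, the right-toothed comb structure of ancestors of sub-roots together with invariant~2 force $x$ to have left-depth $0$ in case (a) (so its path from the root consists purely of right pointers down the rightmost tooth), while in case (b) the BST property, together with the fact that $x$ lies in the left subtree of the root rather than in the rightmost tooth, forces its path to have shape ``$j \ge 1$ lefts followed by $k \ge 0$ rights.'' Lemma~\ref{lem:PostorderInsertionTrees} further controls the descendants of $x$: if $x$ is a left child in $\BST(\pi)$ then $x$ has no right child, which in turn restricts how many new sub-roots appear when $x$ becomes touched.

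In case (a), the splay path consists entirely of right pointers and decomposes as zero or more right-zig-zigs with possibly a final zig, exactly mirroring the left-depth-$0$ analysis of the preorder proof. In case (b), the splay decomposes as zero or more right-zig-zigs that unwind the bottom rightward portion of the path, at most one transitional zig-zag where the path bends from right-children to left-children, zero or more left-zig-zigs climbing the leftward portion, and possibly a final zig. As in the preorder proof, the central accounting is that each zig-zig removes its grandparent from the set of touched ancestors of sub-roots (releasing $2$ units of potential to pay for the $2$ rotations), while touching $x$ raises the potential by at most $2$ and the zig-zag and optional zig together contribute only $O(1)$ additional rotations, yielding constant amortized cost per splay.

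The main obstacle will be verifying that the three invariants are preserved, particularly invariants~1 and~2. This requires tracking both the new sub-roots (coming from $x$'s children, whose configuration is controlled by Lemma~\ref{lem:PostorderInsertionTrees}) and the surviving sub-roots (which lie outside the splay path and are therefore unmoved) in the transformed tree, and then re-applying Lemma~\ref{lem:PostorderSubroots} to identify the next sub-root and confirm it sits in a position consistent with the invariants. A careful case analysis on whether $x$ came from case~(a) or case~(b) of Lemma~\ref{lem:PostorderSubroots}, and on whether the splay contained a zig-zag, should show that the ancestors of sub-roots continue to form a right-toothed comb with left-depths decreasing from smallest to largest, closing the induction.
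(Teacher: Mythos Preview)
Your proposal is correct and follows essentially the same approach as the paper: the same potential, the same two-case split via Lemma~\ref{lem:PostorderSubroots}, the same decomposition of the splay path into an initial run of same-side zig-zigs, a single transitional step at the bend, a second run of same-side zig-zigs, and an optional final zig, and the same accounting idea that most zig-zigs pay for themselves. One refinement to watch when you fill in details: the blanket claim that \emph{every} zig-zig removes its grandparent from the ancestors-of-sub-roots set is slightly too strong---at the transition in case~(b) when that step happens to be a zig-zig (even-length right segment) the grandparent is $w$, whose left subtree still holds all smaller sub-roots, and at the very last step the old root's other subtree may still hold a sub-root; the paper handles this by explicitly exempting ``the step involving $w$'' and ``the one making $x$ the root'' from the self-paying argument, which is what gives the amortized bound of~$6$ rather than~$4$.
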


\begin{proof}
The base case is trivial. Lemma \ref{lem:PostorderSubroots} dictates that the
next splayed sub-root is either greater than all marked items, or is the
largest sub-root smaller than the tree root. Let $x$ be the next node to be
insertion splayed, $y$ its left child, and $z$ its right child (either or both
children may be missing).

Suppose $x$ is greater than the current tree root. Marking $x$ increases the
potential by $2$ and makes $y$ and $z$ new sub-roots. The splay operation
brings $x$ to the root by a sequence of left zig-zigs followed possibly by a
left zig (depending on whether the length of the access path is odd or even).
After each one of these zigs or zig-zigs, $y$'s left-depth remains $1$, and
$z$'s left depth remains $0$. Let $v$ be the root prior to the splay operation.
If the last splay step is a zig then the last splay operation increases the
left depth of $v$ and everything in its left subtree by either $1$ or $2$.
Since the left-depth of $x$ was $0$ and $x$ was the largest sub-root, the
inductive hypothesis ensures that all sub-roots had left-depth at least $1$
before the splay operation, and therefore at least $2$ afterward. Thus, when
$x$ becomes the root, the left-depths of each sub-root decrease from left to
right.

Otherwise, $x$ is the largest sub-root less than the root. Marking $x$ again
increases the potential by at most $2$. By Lemma
\ref{lem:PostorderInsertionTrees}, $x$ has no right child (see Figure
\ref{fig:PostorderSubroots}), so we only need to worry about its left child
$y$. Let $w$ be the last ancestor of $x$ that is a left child. Each left
zig-zig prior to the splay step involving $w$ maintains the left-depth of $y$
to be one greater than the left-depth of $x$. The splay step involving $w$ will
either be a left zig-zig or a left-right zig-zag, depending on the length of
the original path connecting $w$ to $x$. Regardless, immediately after the
splay step involving $w$, the ancestor of $y$ that is the left child of $x$ is
either the left child of $w$ or the left child of $w$'s parent. Since all the
sub-roots less than $y$ are in the left subtree of $w$, and thus have
left-depth greater than the left-depth of $y$, the invariant is restored, and
remains true after each right zig-zig or zig that brings $x$ to the root.

All that remains is showing constant amortized time. As noted before, marking
$x$ costs $2$. If $x$ is greater than the root then each left zig-zig, except
possibly the last, pays for itself, giving amortized cost of $4$. In the other
case, all splay steps except for the one involving $w$ and the one making $x$
the root pay for themselves, giving amortized cost at most $6$.
\end{proof}

\section{Balanced Trees}\label{sec:BalancedTrees}

Let $|x|$ denote the size of the subtree rooted at $x$. Following
\cite{WEIGHT_BALANCED_TREES}, we say $T$ is \emph{$\alpha$ weight balanced} for
$\alpha\in (0,1/2]$ if $\min\{|\Left(x)|, |\Right(x)|\}+1\ge
\alpha\cdot(|x|+1)$ for all $x\in T$, and write $T\in\BB[\alpha]$.

\begin{theorem}\label{thm:BalancedTraversal}
For any \emph{(fixed)} $0<\alpha\le 1/2$, if $S\in\BB[\alpha]$ and $T$ has the
same keys as $S$, then the cost of splaying $\Preorder(S)$ or $\Postorder(S)$
starting from $T$ is $O(|T|)$.
\end{theorem}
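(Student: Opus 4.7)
The plan is to invoke Theorem~\ref{thm:DynamicFinger}. Since $T$ and $S$ share the same $n$ keys, we have $m = n$, $|T| = n$, and the rank function $r_T(\cdot) = r_S(\cdot)$ is independent of which of the two trees is used; write $r$ for this common rank. Taking $X = \Preorder(S)$ (the postorder case is symmetric), the theorem reduces the problem to showing
\[
\DF_T(X) \;=\; \sum_{i=2}^n \log_2\bigl(|r(x_i) - r(x_{i-1})| + 1\bigr) \;=\; O(n),
\]
where the implied constant may depend on $\alpha$.

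Next I would set up a divide-and-conquer bound on $\DF_T(X)$. For each subtree $S_v$ rooted at $v \in S$ with children's subtrees $L_v, R_v$, the decomposition $\Preorder(S_v) = (v) \oplus \Preorder(L_v) \oplus \Preorder(R_v)$ introduces at most two consecutive pairs not already present in the inner preorders: one from $v$ to $\Root(L_v)$, and one from the last key of $\Preorder(L_v)$ to $\Root(R_v)$. The ranks appearing within $L_v$ and within $R_v$ each form a contiguous block of $S$-ranks, so the gap of either new pair is at most $|v|-1$, contributing at most $\log_2(|v|+1)$ to $\DF_T$. Telescoping the recurrence
\[
\DF_T(\Preorder(S_v)) \;\le\; \DF_T(\Preorder(L_v)) + \DF_T(\Preorder(R_v)) + 2\log_2(|v|+1)
\]
gives $\DF_T(\Preorder(S)) \le 2\sum_{v\in S} \log_2(|v|+1)$. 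The postorder case is analogous: the two new junctions now sit at the boundaries of $\Postorder(L_v) \oplus \Postorder(R_v) \oplus (v)$, with the same rank-gap bound.

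I would then prove $\sum_{v\in S}\log_2(|v|+1) = O(n)$ whenever $S \in \BB[\alpha]$. Swapping sums via $\log_2(s+1) = \sum_{k=1}^{s} \log_2(1+1/k)$ yields
\[
\sum_{v\in S}\log_2(|v|+1) \;\le\; \frac{1}{\ln 2}\sum_{k\ge 1}\frac{N_k}{k}, \qquad N_k := \bigl|\{v \in S : |v|\ge k\}\bigr|,
\]
reducing matters to $N_k = O(n/k)$. The set $U_k = \{v : |v|\ge k\}$ is closed under taking parents, hence a connected subtree of $S$ containing $\Root(S)$. Its \emph{leaves} (nodes in $U_k$ with no child in $U_k$) root pairwise-disjoint $S$-subtrees of size at least $k$, so their count is at most $n/k$; the number of nodes in $U_k$ with both children in $U_k$ is one less by the standard leaf/degree-$2$ tree identity. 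For each remaining \emph{chain} node $v \in U_k$ (with exactly one child in $U_k$), weight-balance forces the off-path sibling subtree to have size at least $\alpha(|v|+1) - 1 = \Omega(k)$; these sibling subtrees are pairwise disjoint across chain nodes, so at most $O(n/k)$ of them exist. Combining, $N_k = O(n/k)$ and $\sum_{k\ge 1} N_k/k = O(n)$.

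The main obstacle will be this last weight-balance estimate: a naive level-by-level bound of the form $\sum_d 2^d \log\bigl((1-\alpha)^d(n+1)\bigr)$ fails to give $O(n)$ once $\alpha < 1/2$, because $2^d$ overcounts the nodes actually present at depth $d$ in a tree that is not perfectly balanced. The size-stratified argument through $N_k$ exploits weight-balance globally rather than one depth at a time, and yields the desired linear bound. Combining all three ingredients and invoking Theorem~\ref{thm:DynamicFinger} produces the claimed $O(|T|)$ splay cost for both $\Preorder(S)$ and $\Postorder(S)$.
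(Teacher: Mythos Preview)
Your argument is correct and begins exactly as the paper does: invoke Theorem~\ref{thm:DynamicFinger} to reduce to bounding $\DF_T(\Preorder(S))$, then use the preorder decomposition to obtain the recurrence $\DF_T(\Preorder(S_v))\le\DF_T(\Preorder(L_v))+\DF_T(\Preorder(R_v))+2\log_2(|v|+1)$. The divergence is in how the recurrence is discharged. The paper packages it as $A_\alpha(n)=\max_{\alpha\le\beta\le 1/2}\{A_\alpha(\beta n)+A_\alpha((1-\beta)n)\}+O(\log n)$ and invokes Akra--Bazzi as a black box. You instead unroll the recurrence completely to $\DF_T(\Preorder(S))\le 2\sum_{v\in S}\log_2(|v|+1)$ and bound that sum by the size-stratified count $N_k=|\{v:|v|\ge k\}|=O(n/k)$, which you derive from weight-balance by the leaf/degree-two/chain decomposition of the subtree $U_k$. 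Your route is more elementary (no Akra--Bazzi), makes the $\alpha$-dependence explicit through the sibling-subtree lower bound $\alpha(|v|+1)-1=\Omega(k)$, and incidentally identifies the quantity $\sum_v\log(|v|+1)$---the familiar sum-of-logs potential---as the real object being bounded. The paper's route is terser once one is willing to cite the master-theorem machinery. Both give $O(f(\alpha)\cdot n)$, matching the paper's Remark~1.
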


\begin{proof}
By Theorem \ref{thm:DynamicFinger}, it suffices to show that
$\DF_T(\Preorder(S))=O(|T|)$. Let
\begin{align*}
A_\alpha(n)\equiv\max\{\DF_T(\Preorder(S))\mid\text{$S\in\BB[\alpha]$ and $|T|=n$}\}.
\end{align*}

Recall that $\Preorder(S)=(\Root(S))\oplus\Preorder(L)\oplus\Preorder(R)$,
where $L$ and $R$ are the left and right subtrees of the root of $S$,
respectively. Notice that the rank differences between $\Root(S)$ and the first
item in $\Preorder(L)$, and between the last item in $\Preorder(L)$ and the
first item in $\Preorder(R)$, are at most $|T|$ by definition. Hence,
\begin{align*}
\DF_T(\Preorder(S)) \le \DF_T(\Preorder(L)) + \DF_T(\Preorder(R)) + 2\log_2(|T|+1).
\end{align*}

Observe that $(|L|+1)/(|S|+1)\in[\alpha,1-\alpha]$ since $S\in\BB[\alpha]$, and by definition $|R|<|S|-|L|$. Hence,
\begin{align*}
A_\alpha(n)=\max_{\alpha\le\beta\le1/2}\{A_\alpha(\beta\cdot n) + A_\alpha((1-\beta)\cdot n)\} + O(\log n).\footnotemark
\end{align*}
\footnotetext{Technically, since $|L|/|S|<(|L|+1)/(|S|+1)$, we need to pick $S$ sufficiently large for a given alpha, and offset the recurrence term by a corresponding constant. This does not asymptotically affect the result.}%
Akra-Bazzi's result \cite{AKRA_BAZZI} suffices to show $A_\alpha(n)=O(n)$ for
fixed $\alpha$. The proof for postorders is identical.
\end{proof}

\begin{remark}
In actuality, $A_\alpha(n)=O(f(\alpha)\cdot n)$ for some function $f$ of
$\alpha$. Unfortunately, the computation appears to be messy. We have declined
to do the necessary footwork, as we strongly suspect that, regardless,
$A_\alpha(n)$ does not tightly bound the cost of splaying these sequences.
\end{remark}

\begin{remark}
This result extends to any binary search tree algorithm that satisfies the
dynamic finger bound. Iacono and Langerman proved Greedy also has the dynamic
finger property \cite{GREEDY_DYNAMIC_FINGER}; their analysis does not
consider initial trees, however.
\end{remark}

\section{Remarks}\label{sec:GeneralPatterns}

Patterns that avoid $(2,1,3)$ are ``symmetric'' to those that avoid $(2,3,1)$:
if $\pi\setminus(2,1,3)$ then $\pi$ is the preorder of the mirror image of
$\BST(\pi)$. Similarly, patterns that avoid $(1,3,2)$ are symmetric to patterns
that avoid $(3,1,2)$. Thus, insertion splaying $\pi\setminus(2,1,3)$ and
$\pi\setminus(1,3,2)$ takes linear time.

The only other patterns of length three are $(3,2,1)$ and its symmetric
counterpart $(1,2,3)$. The pattern $(3,2,1)$ was explored in
\cite{PATTERN_AVOIDANCE}, where it was shown that Greedy executes
$(3,2,1)$-avoiding permutations in linear time starting from an
\emph{arbitrary} tree. In fact, they showed that executing
$\pi\setminus(k,\dots,2,1)$ takes time proportional to $n\cdot 2^{O(k^2)}$;
this is linear in $n$ for fixed $k$. These permutations are called
\emph{$k$-increasing} because they can be partitioned into $k-1$ disjoint
monotonically increasing subsequences \cite{PATTERN_AVOIDANCE}. They form the
natural generalization of \emph{sequential access}, which is the (unique)
permutation of the tree nodes that avoids $(2,1)$.

More general invariants can be derived about insertion tree structure and
sub-root insertion order based on pattern-avoidance. As one particularly
interesting example:

\begin{theorem}
If $\pi\setminus(k,\dots,2,1)$ then no node in $\BST(\pi)$ has left-depth more
than $k-2$, and the next sub-root inserted (without splaying) is always the
smallest sub-root with its given left-depth.
\end{theorem}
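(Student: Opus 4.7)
Both parts rest on the following observation. Let $x \in \BST(\pi)$ have left-step ancestors $u_1, u_2, \ldots, u_d$ listed shallowest-first, where $d = \LD(x)$. Each $u_{j+1}$ sits in the left subtree of $u_j$, giving the value chain $u_1 > u_2 > \cdots > u_d > x$; by Lemma~\ref{lem:AncestorPrecedence} these ancestors also precede $x$ in $\pi$ in that same order, so $(u_1, \ldots, u_d, x)$ is a decreasing subsequence of $\pi$ of length $d+1$.

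For the left-depth bound, take any $x$ with $\LD(x) \ge k-1$: the chain above is then a $(k,k-1,\ldots,1)$-pattern, contradicting $\pi \setminus (k, \ldots, 1)$. Hence $\LD(x) \le k-2$. This is the direct analogue of the chain steps appearing in Lemmas~\ref{lem:PreorderSubroots} and~\ref{lem:PostorderSubroots}, only with longer forbidden patterns.

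For the sub-root claim, argue by contradiction: assume some sub-root $q$ at step $i$ satisfies $q < p_i$ and $\LD(q) = \LD(p_i) = d$. Because $q$ is untouched when $p_i$ is inserted, $q$ appears strictly after $p_i$ in $\pi$, so appending $p_i$ and then $q$ to $p_i$'s ancestral chain produces the decreasing subsequence $(u_1, \ldots, u_d, p_i, q)$ of length $d+2$. In the extremal case $d = k-2$ this is already a forbidden $(k,\ldots,1)$-pattern, and the argument closes.

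The main obstacle is the sub-extremal case $d < k-2$, where $d+2 < k$ means the chain is not yet long enough to invoke avoidance. The plan here is to enlarge the chain using $q$'s own ancestry. Writing $r = \LCA_{\BST(\pi)}(p_i, q)$, the left-step ancestors of $q$ strictly below $r$ lie in $r$'s \emph{left} subtree (all less than $r$), while those of $p_i$ below $r$ lie in $r$'s \emph{right} subtree (all greater than $r$), so the two chains value-separate cleanly at $r$. Together with left-spine descendants of $q$ (which appear after $q$ in $\pi$ and continue the descent), one should be able to weave a single $\pi$-ordered decreasing subsequence of length at least $k$. Showing that this weaving is always available---and in particular that the forced structure beneath $q$ is rich enough when $d < k-2$---is the delicate technical heart of the argument, and is where I would expect the proof to require the most care.
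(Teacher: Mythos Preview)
The paper gives no detailed argument here; it only says the proof ``is similar to Lemmas~\ref{lem:PostorderInsertionTrees} and~\ref{lem:PostorderSubroots}.'' Your decreasing-chain idea is exactly that analogy, and it handles both the left-depth bound and the extremal sub-root case cleanly. For Part~1, the chain $(u_1,\dots,u_d,x)$ of left-step ancestors is a decreasing subsequence of length $d+1$, forcing $d\le k-2$; this is precisely the generalization of Lemma~\ref{lem:PostorderInsertionTrees}. For Part~2 with $d=k-2$, appending $p_i$ and then $q$ yields length $k$ and the contradiction goes through, mirroring Lemma~\ref{lem:PostorderSubroots}. Note also that the case $d=0$ is trivial for a structural reason you do not mention: the touched right-spine nodes form a prefix, so there is at most one sub-root of left-depth~$0$.

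Your instinct that the intermediate range $0<d<k-2$ is ``the delicate technical heart'' is correct, but the proposed chain-weaving cannot succeed, because the claim as literally stated is \emph{false} there. Take $k=4$ and $\pi=(3,6,4,1)$. The longest decreasing subsequence is $(6,4,1)$, of length~$3$, so $\pi$ avoids $(4,3,2,1)$. The insertion tree has root~$3$ with left child~$1$ and right child~$6$, and $6$ has left child~$4$; thus $\LD(1)=\LD(4)=1$. At step $i=3$ the touched set is $\{3,6\}$, the sub-roots are $\{1,4\}$, both at left-depth~$1$, and the inserted node is $p_3=4$ --- not the smallest sub-root of its left-depth. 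In this instance $q=1$ has no descendants and its only left-step ancestor is $r=3$ itself, so there is nothing with which to extend the chain to length~$4$.

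So the gap you flagged is real and unfixable within the stated theorem: for $k\le 3$ every admissible $d$ is either $0$ or $k-2$ and your argument is complete, but for $k\ge 4$ the second clause needs to be weakened or reformulated before any proof can close. The paper's one-line pointer to Lemmas~\ref{lem:PostorderInsertionTrees} and~\ref{lem:PostorderSubroots} does not address this.
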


The proof is similar to Lemmas \ref{lem:PostorderInsertionTrees} and
\ref{lem:PostorderSubroots}. In particular, the insertion trees of
$(3,2,1)$-avoiding permutations look like the combs of postorder insertion
trees, except the teeth are rightward, instead of leftward paths.

For $k$-increasing sequences, the potential used for Theorems
\ref{thm:InsertionSplayingPreorders} and \ref{thm:InsertionSplayingPostorders}
needs modifications. The main issue is that in both of these cases, the
zig-zigs paid for themselves because the nodes knocked off the access path did
not have sub-root descendants. This structure no longer holds for
$(3,2,1)$-avoiding sequences, since we must splay the nodes of the teeth in
increasing order. The proof seems to require a generalization of the sequential
access theorem. It is possible that the notion of \emph{kernel trees} used by
Sundar in \cite{SUNDAR_DEQUE} for a potential-based proof of the sequential
access theorem could be useful.

\printbibliography

\end{document}